\Crefname{figure}{Fig.}{Fig.}
\Crefname{equation}{Eq.}{Eq.}
\theoremstyle{plain}
\newtheorem{theorem}{Theorem}
\theoremstyle{definition}
\theoremstyle{remark}
\newtheorem{remark}{Remark}
\DeclareMathOperator{\argmin}{arg min}
\renewcommand{\d}{\mathrm{d}}
\newcommand{\eps}{\epsilon}
\newcommand{\abs}[1]{\left\lvert#1\right\rvert}
\newcommand{\bbm}{\begin{bmatrix}}
	\newcommand{\ebm}{\end{bmatrix}}
\newcommand{\DD}{\mathcal{D}}
\newcommand{\GG}{\mathcal{G}}
\newcommand{\EE}{\mathcal{E}}
\newcommand{\ii}{\mathrm{i}}
\newcommand{\half}{\frac{1}{2}}
\newcommand{\nup}{n_{\uparrow}}
\newcommand{\nd}{n_{\downarrow}}
\newcommand{\cis}{{c_{i\sigma}}}
\newcommand{\cisd}{c_{i\sigma}^{\dagger}}
\newcommand{\cjs}{{c_{j\sigma}}}
\newcommand{\nis}{{n_{i\sigma}}}
\newcommand{\niu}{{n_{i\uparrow}}}
\renewcommand{\niu}{{n_{i\uparrow}}}
\newcommand{\nid}{{n_{i\downarrow}}}
\newcommand{\cou}{{c_{1\uparrow}}}
\newcommand{\coud}{c_{1\uparrow}^\dagger}
\newcommand{\ctu}{{c_{2\uparrow}}}
\newcommand{\ctud}{c_{2\uparrow}^\dagger}
\newcommand{\cod}{{c_{1\downarrow}}}
\newcommand{\codd}{c_{1\downarrow}^\dagger}
\newcommand{\ctd}{{c_{2\downarrow}}}
\newcommand{\ctdd}{c_{2\downarrow}^\dagger}
\newcommand{\kud}{\ket{\uparrow\downarrow}}
\newcommand{\sqhalf}{\frac{1}{\sqrt{2}}}
\newcommand{\heff}{H_{\mathrm{eff}}}
\newcommand{\expect}{\mathbb{E}}
\newcommand{\cmt}[1]{{#1}}
\author[1]{Hongkang Ni\thanks{hongkang@stanford.edu}} 
\author[2]{Haoya Li\thanks{lihaoya@stanford.edu}} 
\author[1,2]{Lexing Ying\thanks{lexing@stanford.edu}}
\affil[1]{Institute for Computational and Mathematical Engineering, Stanford University, Stanford, CA 94305}
\affil[2]{Department of Mathematics, Stanford University, Stanford, CA 94305}
\begin{document}
\title{Quantum Hamiltonian Learning for the Fermi-Hubbard Model}

\date{}

\maketitle
\begin{abstract}
    This work proposes a protocol for Fermionic Hamiltonian learning. For the Hubbard model defined on a bounded-degree graph, the Heisenberg-limited scaling is achieved while allowing for state preparation and measurement errors. To achieve $\eps$-accurate estimation for all parameters, only $\tilde{\mathcal{O}}(\eps^{-1})$ total evolution time is needed, and the constant factor is independent of the system size. Moreover, 
    our method only involves simple one or two-site Fermionic manipulations, which is desirable for experiment implementation. 
\end{abstract}
\textbf{Keywords  }Quantum algorithm, Hamiltonian learning, Fermi-Hubbard model, Heisenberg limit.
\textbf{Mathematics Subject Classification  }81P68, 68W20

\section{Introduction}

A fundamental task in quantum physics is to obtain the Hamiltonian of a given system from its (noisy) time evolution and measurements. This problem is traceable at least as far back as the spectroscopy analysis of a two-level system using Ramsey's interferometry \cite{ramsey1956molecular} and has been extensively studied under the name spectroscopy \cites{wineland1992spin, leibfried2004toward}, quantum sensing \cite{degen2017quantum}, quantum process tomography\cites{nielsen2001quantum, schirmer2004experimental, cole2005identifying, altepeter2003ancilla}, and Hamiltonian tomography \cites{wang2015hamiltonian, lapasar2012estimation, schirmer2009two} using various techniques including quantum state tomography, Bayesian analysis, compressed sensing and machine learning. This paper adopts the relatively recent terminology ``Hamiltonian learning'' \cites{innocenti2020supervised, wiebe2014hamiltonian, wang2017experimental, huang2023learning, li2023heisenberg}. Hamiltonian learning has a variety of applications in quantum algorithms and the generic study of quantum systems. For instance, in the context of using analog quantum algorithms for Hamiltonian simulation, Hamiltonian learning algorithms play a crucial role in verifying and calibrating these analog computers \cites{wiebe2014hamiltonian, wang2017experimental, kokail2021quantum}. \cmt{In particular, cold atoms have been employed for the analog simulation of the Fermi-Hubbard model \cite{tarruell2018quantum}, showcasing a potential application in Fermion Hamiltonian learning.}

Unlike the Hamiltonian learning problem of a two-level system \cites{wineland1992spin, schirmer2004experimental, cole2005identifying, cole2006identifying, ralph2012interleaved, schirmer2015ubiquitous}, learning a many-body Hamiltonian \cites{burgarth2011indirect, shabani2011estimation} is essentially more challenging since different sites are arbitrarily coupled, and the exact eigenstates are intractable. A natural strategy is to first reduce the original system to decoupled subsystems consisting of a single site or a pair of sites and then apply single-body or two-body methods. A typical early example named dynamical decoupling (DD) is introduced in \cite{wang2015hamiltonian} for Hamiltonian learning of many-body systems, where carefully designed Pauli pulses are inserted between evolution operators of the system to eliminate certain coupling terms in the Hamiltonian. Similar ideas have also been used in earlier works \cites{viola1998dynamical, viola1999dynamical} to filter out unwanted terms in the Hamiltonian. 

More recently, the authors of \cite{huang2023learning} proposed a novel decoupling method that essentially introduces additional conservation law by inserting random unitaries that correspond to an artificial symmetry group. This procedure is dubbed ``reshaping'' of the Hamiltonian. By combining the reshaping technique with robust phase estimation (see \cites{kimmel2015robust, ni2023low}), the authors are able to solve the Hamiltonian learning problem of many-body spin systems while achieving the Heisenberg limit, which is a fundamental complexity lower bound in quantum metrology (\cmt{see \cite{giovannetti2006quantum, zwierz2010general, zhou2018achieving,higgins2007entanglement}}). A follow-up work \cite{li2023heisenberg} has extended this strategy to Bosonic systems by carefully dealing with the unbounded Bosonic operators. The Heisenberg-limited scaling is also obtained, though an additional system size factor is involved in the error bound. 

To the best of the authors' knowledge, a Hamiltonian learning algorithm for many-body Fermionic systems with Heisenberg-limited scaling is still missing. Though the renowned Jordan-Wigner transformation can be used to convert Fermionic systems to spin systems (see \cite{reiner2016emulating} for an example of one-dimensional Fermi-Hubbard models), the method in \cite{huang2023learning} cannot be applied to Fermionic systems directly by combining with the Jordan-Wigner transformation due to the additional phase factors. The additional phase factors destroy the locality of the Hamiltonian and result in a much more challenging learning problem. In this paper, we aim to solve the Fermionic Hamiltonian learning problem with operations that are intrinsically Fermionic. \cmt{In stark contrast to the Bosonic Hamiltonian learning algorithm \cite{li2023heisenberg}, the error bound obtained in this work does not depend on the size of the system since the Fermion setting does not involve unbounded operators.}

{\cmt{Most recently, during the revision of the current paper, \cite{mirani2024learning} was proposed to address a more general class of fermionic Hubbard Hamiltonians, including complex hopping amplitudes and nonzero chemical potentials.}

\section{Main results}
\subsection{Preliminaries and Notations}
In this paper, we are specifically concerned with the Fermi-Hubbard model with spins. 
Here, we provide a brief review of the basics of the Fock space that describes such a system and the operators on it. \cmt{One can refer to \cite{lindsey2019quantum} or other textbooks on the quantum field theory for a more detailed discussion.} For Fermions with spin on $N$ sites, the Hilbert space that describes a single Fermion is a $2N$-dimensional complex linear space $\mathcal{H}$ with orthogonal basis vectors $\{\ket{\uparrow}_1, \ket{\downarrow}_1, \ket{\uparrow}_2, \ket{\downarrow}_2, \ldots, \ket{\uparrow}_N, \ket{\downarrow}_N \}$, where the subscripts are indices for the sites. The Fock space $\mathcal{F}$ for a many-body system is defined by exterior algebra $\oplus_{n=0}^\infty\bigwedge^{n}(\mathcal{H})$. Here we further define $\bigwedge^{0}(\mathcal{H})$ as the 1-dimensional space spanned by $\ket{-}$ called the ``vacuum state''. It satisfies $\ket{-}\wedge\ket{\alpha} = \ket{\alpha}\wedge\ket{-} = \ket{\alpha}$ for any $\ket{\alpha}\in\mathcal{F}$. Since the dimension of $\mathcal{H}$ is $2N$, we also have $\oplus_{n=0}^\infty\bigwedge^{n}(\mathcal{H})=\oplus_{n=0}^{2N}\bigwedge^{n}(\mathcal{H})$.

We introduce the notation $\kud_i:= \ket{\uparrow}_i\wedge\ket{\downarrow}_i$. We denote by $\cis$ ($\cisd$) the annihilation (creation) operator on site $i$ of spin $\sigma$ ($\sigma\in\{\uparrow,\downarrow\}$). The creation operator $\cisd$ is defined by $\cisd\ket{\alpha} = \ket{\sigma}_i\wedge\ket{\alpha}$ for any $\ket{\alpha}\in\mathcal{F}$. For example, $c_{i\downarrow}^\dagger\ket{\downarrow}_i = \ket{\downarrow}_i\wedge\ket{\downarrow}_i=0$, and $c_{i\downarrow}^\dagger\ket{\uparrow}_i = \ket{\downarrow}_i\wedge\ket{\uparrow}_i=-\kud_i$.
The annihilation operator $\cis$ is the Hermitian transpose of $\cisd$. It is straightforward to show the anti-commutation relations $\{c_{i\sigma}^\dagger, c_{j\sigma'}\} = \delta_{ij}\delta_{\sigma\sigma'}$ and $\{c_{i\sigma}^\dagger, c_{j\sigma'}^\dagger\} = \{c_{i\sigma}, c_{j\sigma'}\}=0$. We further denote by $\nis = \cisd\cis$ the number operator at site $i$.

The Fermi-Hubbard model \cite{Hubbard1963ElectronCI} stands as one of the simplest yet most extensively studied models for Fermionic systems, successfully explaining the superconductive and magnetic effects of solid materials via the interaction of electrons. The Hamiltonian of a spinful Fermi-Hubbard model 
\begin{equation}\label{eq:Hubbard}
	H = -\!\!\!\!\!\!\!\!\sum_{i\sim j, \sigma\in\{\uparrow,\downarrow\}}\!\!\!\!\!\! h_{ij}\cisd\cjs+\sum_{i}\xi_i \niu\nid,
\end{equation}
is an operator on Fork space, where $i\sim j$ denotes that the sites $i$ and $j$ are neighbors. Typically $h_{ij}=h_{ji}$. We also assume $|h_{ij}| \le 1$ and $|\xi_i|\le 1$. Otherwise, we can scale down $H$ by changing the simulation time. The number of sites is denoted by $N$. The goal is to learn the coefficients $h_{ij}$ and $\xi_i$ of the Hamiltonian \eqref{eq:Hubbard} for $1\le i,j\le N$. The following notations are used for the products of operators:
\begin{equation*}
    \prod_{1 \leq l \leq L}^{\rightarrow} O_l = O_1 O_2\cdots O_L,\quad \prod_{1 \leq l \leq L}^{\leftarrow} O_l = O_L\cdots O_2 O_1.
\end{equation*}

An important ingredient of our approach is the robust phase estimation (RPE) routine. RPE is a technique that gives approximate values of phase from signals with $\mathcal{O}(1)$ error and achieves the Heisenberg limit scaling \cites{kimmel2015robust, ni2023low}, using the idea of iterative refinement for the estimation of the phase from noisy signals. One can refer to \Cref{thm:rpe} and the remark following the theorem for more details. 

\subsection{Main algorithm and theoretical results}
Assume that the Fermionic sites are located on a graph $G = (V, E)$. Each vertex $i\in V$ represents a Fermionic site, and the interaction between two Fermions exists if and only if two edges join them in graph $G$. Therefore, the objective Hamiltonian is 
\begin{equation}\label{eq:hamiltonian}
	H = -\!\!\!\!\!\!\!\!\sum_{(i, j)\in E, \sigma\in\{\uparrow,\downarrow\}}\!\!\!\!\!\! h_{ij}\cisd\cjs+\sum_{i}\xi_i \niu\nid,\quad (h_{ij} = h_{ji}).
\end{equation}
As mentioned earlier, our main strategy is to divide the edge set $E$ into different colors and use the reshaping technique to reduce the Hamiltonian $H$ into an effective Hamiltonian $(\heff)_c$ that only contains the interactions within a certain color $c$, then learn the parameters for the pairs of sites belonging to this color using single-site and two-site methods since they are decoupled. An outline for the algorithm is provided below.

\begin{algorithm}[ht]
	\caption{Fermionic Hamiltonian learning for Hubbard models (outline)}
	\label{alg:outline}
	\begin{algorithmic}[1]
		\STATE{\textbf{Input:} The graph $G = (V,E)$ for the Fermi-Hubbard Hamiltonian $H$ of the form \Cref{eq:hamiltonian}.}
		\STATE{Generate the coloring partition of $E$ with $\chi$ colors (see \Cref{sec:many-site} for details).}
		\FOR{$c = 1,2,\ldots, \chi$}
            \STATE{Prepare initial state $\ket{\Phi_c}$ based on the chosen color $c$ (see \Cref{sec:many-site} for details).}
		\STATE{Reshape the Hamiltonian to get a state that approximates $e^{\ii {(\heff)}_c t_j}\ket{\Phi_c}$ for a set of different simulation time $\{t_j\}_{j=0}^J$ with $\mathcal{O}(1)$ reshaping error (see \Cref{sec:reshaping} for more details).}
            \STATE{Perform measurements using observable $O_c$ that depends on the chosen color $c$ (see \Cref{sec:many-site} for details).} 
            \STATE{Repeat the process sufficiently many times and extract signals from the measurements.}
            \STATE{Apply RPE to get estimations for the parameters corresponding to sites in the considered color $c$ (see \Cref{sec:single-site} and \Cref{sec:two-site} for more details).}
		\ENDFOR
		\STATE{\textbf{Output:} $\hat{h}_{ij}$ and $\hat{\xi_i}$. }
	\end{algorithmic}
\end{algorithm}

The main theoretical result of this work is the following theorem:
\begin{theorem}
(Informal) Assume $H$ is a Hamiltonian in the form \eqref{eq:hamiltonian}. For a given failure probability $\eta$, we can generate estimations $\hat{h}_{ij}$ and $\hat{\xi_i}$ with precision $\epsilon$ for all $i,\ j$ at the cost of
\begin{itemize}
    \item $\tilde{\mathcal{O}}\left(\epsilon^{-1} \log \left(\eta^{-1}\right)\right)$ total evolution time;
    \item $\tilde{\mathcal{O}}\left(\log\left(\epsilon^{-1}\right) \log \left(\eta^{-1}\right)\right)$ number of experiments;
    \item $\tilde{\mathcal{O}}\left(N \epsilon^{- 2}  \log \left(\eta^{-1}\right)\right)$ single-site random unitaries insertions.
\end{itemize}
Here we use the notation $\tilde{\mathcal{O}}$ to hide the higher order $\log$ terms for conciseness.
\end{theorem}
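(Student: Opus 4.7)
The plan is to reduce the many-body learning problem to a collection of decoupled single-site and two-site estimation problems, and then invoke robust phase estimation on each of them. First, I would edge-color the interaction graph $G=(V,E)$ using a constant number $\chi$ of colors; since the graph is bounded-degree, Vizing's theorem gives $\chi = \Delta+1 = \mathcal{O}(1)$, and within a single color class the hopping terms act on disjoint site pairs. This means that if we could truly suppress all hopping terms whose edges do not belong to the current color $c$, the remaining effective Hamiltonian $(\heff)_c$ would be a direct sum of one-site (on-site interaction $\xi_i n_{i\uparrow} n_{i\downarrow}$) and two-site blocks (hopping $h_{ij}$ plus residual on-site pieces), which can be diagonalized analytically.

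Second, I would show that the reshaping step in \Cref{alg:outline} realizes this decoupling approximately. Concretely, one would insert random on-site Fermionic unitaries (taken from an artificial symmetry group that preserves the color-$c$ hoppings but scrambles all others) between short Trotter-like evolutions; a standard averaging argument analogous to the spin case of \cite{huang2023learning} shows that the resulting channel approximates the ideal evolution $e^{\ii (\heff)_c t}$ with an additive error that can be made $\mathcal{O}(1)$, independent of $t$, by taking the number of insertions proportional to $t/\delta$ for some small step size $\delta$. Because the random unitaries are single-site and the interaction graph is bounded-degree, the number of insertions grows like $\tilde{\mathcal{O}}(N t)$; setting $t = \tilde{\mathcal{O}}(\epsilon^{-1})$ per experiment and summing over $\tilde{\mathcal{O}}(\log(\epsilon^{-1})\log(\eta^{-1}))$ RPE experiments yields the claimed $\tilde{\mathcal{O}}(N \epsilon^{-2}\log(\eta^{-1}))$ total number of random unitary insertions.

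Third, with $(\heff)_c$ in hand, I would choose the initial states $\ket{\Phi_c}$ and observables $O_c$ so that the expectation value $\braket{\Phi_c | e^{-\ii (\heff)_c t} O_c e^{\ii (\heff)_c t} | \Phi_c}$ is a clean signal of the form $a + b\cos(\omega t + \phi)$ where $\omega$ is a linear function of the parameter being learned (either $\xi_i$ for a single-site signal or some function of $h_{ij}$ and nearby $\xi$ values for a two-site signal, the latter being solved exactly in the $2$-site, $4$-dimensional block). Feeding these signals at a geometric sequence $\{t_j\}_{j=0}^{J}$ of evolution times into RPE (\Cref{thm:rpe}), as in \cite{kimmel2015robust, ni2023low}, yields an $\epsilon$-accurate estimate of each $\omega$ with failure probability at most $\eta/(\text{number of parameters})$, using $\tilde{\mathcal{O}}(\epsilon^{-1}\log(\eta^{-1}))$ total evolution time and $\tilde{\mathcal{O}}(\log(\epsilon^{-1})\log(\eta^{-1}))$ experiments per parameter. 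A union bound over all $\mathcal{O}(N)$ parameters, combined with the fact that within each color class we can estimate all corresponding parameters in parallel (they act on disjoint sites and thus share the same experiment), absorbs the $N$-dependence into an $\mathcal{O}(\chi)=\mathcal{O}(1)$ overhead on evolution time and experiment count, and leaves only the single-site-unitary count scaling linearly in $N$.

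The main obstacle, I expect, is verifying that the reshaping error remains $\mathcal{O}(1)$ uniformly in the simulation time $t$ and independently of the system size $N$. In the Bosonic analog of \cite{li2023heisenberg} an extra system-size factor appeared because of unboundedness; here the Fermionic operators are bounded in norm, so the expectation is that careful use of Lieb-Robinson-type locality for bounded-degree Fermionic systems, together with the anti-commutation-aware design of the reshaping unitaries (so that the Jordan-Wigner phase strings are handled intrinsically rather than by mapping to spins), will let one upper bound the reshaping error by a site-local quantity. Getting this step right, in particular designing reshaping unitaries that commute with the color-$c$ hoppings but average out all other hopping terms while remaining Fermionic and one-site, is the technical crux of the argument and the place where the Fermionic setting differs most sharply from the spin setting of \cite{huang2023learning}.
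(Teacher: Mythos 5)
Your overall architecture---constant coloring, randomized reshaping to decouple into one- and two-site blocks, then RPE on each block in parallel---is the same as the paper's, and your final complexity bounds are the right ones. But three steps, as written, either fail or do not support the conclusions you draw from them. First, the two-site signal: you propose to diagonalize the full $4$-dimensional two-site Hubbard block and claim the extracted frequency $\omega$ is linear in the parameter being learned. For a generic block containing both the hopping $h_{12}$ and the on-site terms $\xi_1\niu\nid+\xi_2 n_{2\uparrow}n_{2\downarrow}$, the eigenvalue gaps are \emph{nonlinear} functions of $(h_{12},\xi_1,\xi_2)$, so a single RPE phase does not hand you any one coefficient, and inverting a system of learned frequencies would require a separate error-propagation analysis. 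The paper sidesteps this entirely: to learn $h_{12}$ it initializes in the single-particle, spin-up sector, where $\niu\nid$ annihilates the state and the block reduces to $-h_{12}(\coud\ctu+\ctud\cou)$ with eigenstates $\frac{1}{\sqrt2}(\ket{\uparrow}_1\pm\ket{\uparrow}_2)$ and a phase exactly proportional to $h_{12}$; to learn $\xi_i$ it applies a \emph{second} layer of reshaping (\cref{eq:reshape k1}) that averages away the hopping inside the edge as well, reducing to the pure single-site problem of \cref{sec:single-site}. Without this sector/second-reshaping trick your step three is not justified. Second, the coloring: Vizing's proper edge coloring only guarantees that same-colored edges are vertex-disjoint. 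The reshaping unitary $U=\prod_{i\in V\setminus V_c}e^{-\ii\theta_i(\niu+\nid)}$ leaves untouched any hopping term whose \emph{both} endpoints lie in $V_c$, and under a mere proper edge coloring such off-color edges between two color-$c$ edges can exist, so the effective Hamiltonian would not decouple. The paper therefore colors the \emph{distance-two} edge graph $\GG=(E,\EE)$, using $\chi=4\mathfrak{d}^2+1$ colors, precisely to exclude this. (Your scheme could be repaired with a per-edge rather than per-complement randomization, but you would have to say so and verify that all off-color hoppings still acquire independent phases.)

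Third, the insertion count: the reshaping error scales as $\mathcal{O}(t^2/r)$ (\cref{thm:reshaping error}), so keeping it $\mathcal{O}(1)$ uniformly in $t$ forces $r=\Theta(t^2)$, i.e.\ a step size $\tau=t/r=\Theta(1/t)$ that shrinks with $t$. You write that the number of insertions per experiment is proportional to $t/\delta$ for a ``small step size $\delta$'' and then count $\tilde{\mathcal{O}}(Nt)$ insertions; with $t=\tilde{\mathcal{O}}(\eps^{-1})$ and $\tilde{\mathcal{O}}(\log(\eps^{-1})\log(\eta^{-1}))$ experiments this gives $\tilde{\mathcal{O}}(N\eps^{-1})$, not the $\tilde{\mathcal{O}}(N\eps^{-2})$ you conclude. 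The correct bookkeeping is $r=\mathcal{O}(t^2)=\mathcal{O}(\eps^{-2})$ insertions at the largest time, each comprising $\mathcal{O}(N)$ single-site unitaries, giving $\tilde{\mathcal{O}}(N\eps^{-2}\log(\eta^{-1}))$ in total; your final answer is right but does not follow from your intermediate claim. Finally, a reassurance on the point you flag as the crux: no Lieb--Robinson argument is needed for $N$-independence. The reshaping unitaries $e^{-\ii\theta_i(\niu+\nid)}$ conserve local particle number and parity, so no Jordan--Wigner strings arise, and the proof of \cref{thm:reshaping error} (following Theorem~16 of the spin-system analysis in the reference) bounds the error by commutators that overlap the $\mathcal{O}(1)$-sized support of $O_C$ only, which is where the system-size independence comes from.
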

The readers can refer to \Cref{thm:main_detailed} for a formal version of this theorem. It is clear from the theorem that the Heisenberg limit can be obtained. 


\section{Method description}

We first discuss the simple case of single-site Fermionic learning, which can be solved by preparing proper initial states and performing an RPE-type algorithm for certain observables. This is followed by a discussion of the two-site case and many-site case, where we use the reshaping technique to decouple the Hamiltonian by inserting random unitaries during the simulation. This process will be demonstrated in detail in \Cref{sec:reshaping}.

\subsection{Single-site case}\label{sec:single-site}
For the single site case (i.e., when $N=1$), the Hamiltonian is reduced to 
\begin{equation}
H = \xi\nd\nup,
\end{equation}
where the only parameter to be learned is the coefficient $\xi$. It is clear that one of its eigenvalues is $\xi$ with the corresponding eigenstate $\kud$, and the vacuum state $\ket{-}$ is an eigenstate with eigenvalue $0$. It is worth noticing that extracting knowledge about $\xi$ from experiments requires more than simply initializing the system with its eigenstates since merely using the eigenstates as initial states of the simulation of $H$ will only result in a change in the global phase, which is unobservable. Therefore, the initial states of the experiments need to be a nontrivial linear combination of the eigenstates. For the sake of simplicity, we assume access to the state
\begin{equation}\label{eq:psi}
	\ket{\psi} = \sqhalf(\ket{-}+\kud).
\end{equation}
This state can be prepared as follows. Since $\ket{\psi}$ is a Fermionic state with 
even parity, it is a Gaussian state (see \cite{spee2018mode}). Then, one can apply the method in \cite{jiang2018quantum} for Fermionic Gaussian state preparation. 
We also assume we have access to its corresponding observable 
\begin{equation}\label{eq:o}
    O = \ket{\psi}\!\bra{\psi}.
\end{equation} 
This can also be fulfilled using the method in \cite{jiang2018quantum}, where the Gaussian state $\ket{\psi}$ is obtained by constructing a unitary $U$ that maps $\ket{-}$ to $\ket{\psi}$. With this unitary, we then have $\ket{\psi}\!\bra{\psi} = U\ket{-}\bra{-}U^\dagger$, which means $O$ can be performed by applying $U^\dagger$ to the state to be observed, and then observe it under the canonical basis. Using the initial state $\ket{\psi}$ and evolve under the Hamiltonian $H$ for time $t$, the expectation value of observable $O$ is 
\begin{equation}\label{eq:costxi}
    \braket{O}_{\psi,t} = \bra{\psi}e^{\ii Ht}O e^{-\ii Ht}\ket{\psi} =\frac{1+e^{\ii\xi t}}{2}\cdot\frac{1+e^{-\ii\xi t}}{2}= \half (1+\cos(t\xi)).
\end{equation}
We also assume access to
\begin{equation}\label{eq:tildepsi}
	\ket{\tilde{\psi}} = \sqhalf(\ket{-}+\ii\kud).
\end{equation}
This state also has even parity and is thus a Gaussian state, which can be prepared using the method in \cite{jiang2018quantum}.
By using $\ket{\tilde{\psi}}$ as the initial state and perform measurement using $O$, we get
\begin{equation}\label{eq:sintxi}
    \braket{O}_{\tilde{\psi},t} = \bra{\tilde{\psi}}e^{\ii Ht}O e^{-\ii Ht}\ket{\tilde{\psi}}=\frac{1-\ii e^{\ii\xi t}}{2}\cdot\frac{1+\ii e^{-\ii\xi t}}{2} = \half (1+\sin(t\xi)).
\end{equation}

Hence, by conducting multiple measurements, we can derive estimations for both $\cos(t\xi)$ and $\sin(t\xi)$. The accuracy is inversely proportional to the square of the number of measurements due to the Monte Carlo error estimation. Recall that $\kud$ is the eigenstate of $e^{\ii tH}$, and $\bra{\uparrow\downarrow}e^{\ii tH}\ket{\uparrow\downarrow} = e^{\ii tH} = \cos(t\xi) + \ii \sin(t\xi)$, we can conclude that the aforementioned estimations are equivalent to the estimating the value of $\bra{\uparrow\downarrow}e^{\ii tH}\ket{\uparrow\downarrow}$.

One can then apply robust phase estimation (RPE) \cites{kimmel2015robust, ni2023low} to obtain a noise-robust algorithm. Since the setting here is different to some extent, we reproduce the phase estimation results in \cite{ni2023low} for the sake of completeness.

\begin{algorithm}[ht]
	\caption{RPE algorithm}
	\label{alg:pruning}
	\begin{algorithmic}[1]
		\STATE{\textbf{Input:} $\eps$: target accuracy, $\eta$: upper bound of the failure probability. }
		\STATE{Let $J = \lceil\log_2(\frac{3}{\pi\eps})\rceil$ and calculate $N_s = 2\left\lceil 9\left(\log\frac{4}{\eta}+\log\left(\left\lceil\log_2\frac{3}{\pi\eps}\right\rceil+1\right)\right)\right\rceil$.}
		\STATE{$\theta_{-1} = 0$.}
		\FOR{$j = 0,1,\ldots,J$}
		\STATE{For $t = 2^j$, perform  $\frac{N_s}{2}$ times measurements for both \Cref{eq:costxi} and \Cref{eq:sintxi} to generate $X_j$ and $Y_j$ as estimations of $\cos(2^j\xi)$ and $\sin(2^j\xi)$ respectively. Let $Z_j = X_j+\ii Y_j$ be an estimation of $\bra{\uparrow\downarrow}e^{\ii 2^jH}\ket{\uparrow\downarrow} = e^{\ii 2^j \xi}$.}
		\STATE{Define a candidate set $S_j = \left\{\frac{2k\pi+\arg Z_j}{2^j} \right\}_{k=0,1,\ldots,2^j-1}$.}
		\STATE{$\theta_j = \argmin_{\theta\in S_j}\abs{\theta - \theta_{j-1}}$. }	
		\ENDFOR
		\STATE{\textbf{Output:} $\theta_J$ as an approximation to $\xi$. }
	\end{algorithmic}
\end{algorithm}

\begin{theorem}\label{thm:rpe}
Given target accuracy $\eps$ and admissible failure probability $\eta$, the output of \Cref{alg:pruning} satisfies
	\begin{equation}		\mathrm{Prob}\left(\abs{\theta_J - \xi} < \eps\right) > 1-\eta.
	\end{equation}
	In addition, the total evolution time of the Hamiltonian $H$ is 	\begin{equation}\mathcal{O}\left(\eps^{-1}\left(\log(\eta^{-1})+\log\log(\eps^{-1})\right)\right).
	\end{equation}    
\end{theorem}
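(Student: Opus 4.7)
The proof naturally splits into a concentration-of-measure step, a trigonometric step that converts amplitude error into phase error, and an inductive refinement step. My plan is to carry these out in order and then read off the evolution-time bound.

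First I would control the estimation error at each level. Since $O$ takes values in $\{0,1\}$ and the relations \eqref{eq:costxi}, \eqref{eq:sintxi} imply $X_j = 2\bar{O}_j - 1$ (and similarly for $Y_j$), Hoeffding's inequality applied to the $N_s/2$ Bernoulli samples gives
\begin{equation*}
\mathrm{Prob}\!\left(|X_j - \cos(2^j\xi)| > c\right) \le 2\exp(-N_s c^2/4),
\end{equation*}
and the same for $Y_j$. Choosing $c$ a small absolute constant (e.g.\ $c = 1/3$) and applying a union bound over the $2(J+1)$ estimates, the choice $N_s = 2\lceil 9(\log(4/\eta) + \log(J+1))\rceil$ is exactly calibrated so that the ``good event''
\begin{equation*}
\mathcal{A} := \bigcap_{j=0}^J \{|X_j-\cos(2^j\xi)|\le c \;\text{and}\; |Y_j-\sin(2^j\xi)|\le c\}
\end{equation*}
holds with probability at least $1-\eta$. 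All subsequent analysis is deterministic on $\mathcal{A}$.

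Next I would convert these component-wise bounds into an angular bound. On $\mathcal{A}$ we have $|Z_j - e^{\ii 2^j\xi}| \le c\sqrt{2}$, and an elementary planar-geometry argument (comparing the triangle with vertices $0$, $e^{\ii 2^j\xi}$, $Z_j$ via the law of sines) gives
\begin{equation*}
|\arg Z_j - 2^j \xi \bmod 2\pi| \le \arcsin(c\sqrt{2}) =: \alpha,
\end{equation*}
and $c$ is chosen so that $\alpha < \pi/3$. Consequently the candidate set $S_j$ contains some element $\tilde{\theta}_j$ with $|\tilde{\theta}_j - \xi| \le \alpha/2^j$, while all other elements of $S_j$ lie at distance at least $2\pi/2^j$ from $\tilde{\theta}_j$.

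The main inductive step is then to show $|\theta_j - \xi| \le \alpha/2^j$ for every $j$. The base case $j=0$ holds because $S_0 = \{\arg Z_0\}$, so $\theta_0 = \arg Z_0$ and the angular bound applies directly (using that $|\xi|\le 1 < \pi$ fixes the branch). For the induction step, assume $|\theta_{j-1}-\xi| \le \alpha/2^{j-1} = 2\alpha/2^j$. By the triangle inequality, $|\theta_{j-1} - \tilde{\theta}_j| \le 2\alpha/2^j + \alpha/2^j = 3\alpha/2^j < \pi/2^j$, which is strictly less than half the spacing of $S_j$. Hence the $\argmin$ rule in the algorithm selects $\tilde{\theta}_j$, i.e.\ $\theta_j = \tilde{\theta}_j$, giving $|\theta_j - \xi|\le \alpha/2^j$. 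Applied at $j=J = \lceil\log_2(3/(\pi\eps))\rceil$, this yields $|\theta_J - \xi| < \alpha\pi\eps/3 < \eps$ once $c$ is chosen so that $\alpha < 3/\pi$; this is the source of the specific numerical constants in the statement.

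Finally, the evolution-time accounting is immediate: level $j$ uses time $2^j$ per experiment and $N_s$ experiments, so the total evolution time is
\begin{equation*}
\sum_{j=0}^J N_s\,2^j = N_s(2^{J+1}-1) = \mathcal{O}\!\left(\eps^{-1}\bigl(\log(\eta^{-1}) + \log\log(\eps^{-1})\bigr)\right),
\end{equation*}
matching the claimed bound. The main subtlety, and the only place where constants must be chosen carefully, is reconciling (i) the Hoeffding threshold $c$, (ii) the angular bound $\alpha = \arcsin(c\sqrt{2})$, and (iii) the requirement $3\alpha < \pi$ for the inductive selection to pick the correct branch of $S_j$ while still yielding final precision strictly below $\eps$; everything else is routine.
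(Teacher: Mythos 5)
Your argument is correct and is essentially the paper's own proof (which the paper only sketches, deferring to \cite{ni2023low}): Hoeffding plus a union bound to control $Z_j$ at every level, an induction showing the estimate stays in a window that halves each round because the true candidate is strictly closer to $\theta_{j-1}$ than the $2\pi/2^j$-spaced alternatives, and a geometric sum for the total evolution time. One calibration note: the stated $N_s$ is matched to a per-component Hoeffding threshold $c=\sqrt{2}/3$ (giving $\lvert Z_j-e^{\ii 2^j\xi}\rvert\le 2/3$ as in \Cref{eq:zj} and $\alpha=\arcsin(2/3)\approx 0.73$, which does satisfy both $3\alpha<\pi$ and $\alpha<3/\pi$), whereas your example value $c=1/3$ would need roughly twice the stated $N_s$; also, the paper deliberately keeps the Monte Carlo bound at $2/3<\sqrt{3}/2$ to reserve the remaining slack for the $\mathcal{O}(1)$ reshaping error used in later sections, a separation your version does not make explicit.
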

\begin{proof}
    We will sketch the proof here, and the detailed proof can be found in \cite[Theorem 2]{ni2023low} for the case $\delta = 0$ and $U = e^{iH}$, which is the case corresponding to the exact eigenstate $\kud$. Notice that this algorithm aligns precisely with the RPE algorithm presented there, with the only distinction lying in the generation process of $Z_j$, which does not affect the proof. We also substitute all the $\eps$ with $\frac{\pi\eps}{3}$ to make the conventions align with this paper.

    \cmt{Using Hoeffding’s inequality,} the choice of $N_s$ guarantees that \begin{equation}\label{eq:zj}
	\abs{Z_j - \bra{\uparrow\downarrow}e^{\ii 2^jH}\ket{\uparrow\downarrow}} = \abs{Z_j - e^{\ii 2^j\xi}} < \frac{2}{3}<\frac{\sqrt{3}}{2} 
  \end{equation}
  holds for each $j = 0,1,\ldots,J$ with probability at least $1-\frac{\eta}{J+1}$, therefore it holds for all $j$'s with probability at least $1-\eta$. Once \Cref{eq:zj} holds for all $j$'s, then one can inductively prove that $\xi\in \left(\theta_j-\frac{\pi}{3\cdot 2^j},  \theta_j+\frac{\pi}{3\cdot 2^j}\right)$, where $\theta_j$ is the generated in the $j$-th step in \Cref{alg:pruning}. Therefore, $\abs{\xi-\theta_J}<\frac{\pi}{3\cdot 2^J}\le \eps$.
\end{proof}

\begin{remark}
In practice, the operators and initial states may be imperfect, which can still be dealt with by this RPE algorithm. Especially we want to highlight the scenarios where an $O(1)$ error is involved in the Hamiltonian simulation $e^{\ii Ht}$, resulting in the estimation of $e^{\ii 2^j \xi}$ with an additional $O(1)$ error. This case can be handled since there is still a gap between $\frac{\sqrt{3}}{2}$ and $\frac{2}{3}$ in \Cref{eq:zj}. This feature will be leveraged in the following sections when estimating the multiple-site Hamiltonian, where a reshaping error will be introduced.
\end{remark}

\subsection{Hamiltonian reshaping}\label{sec:reshaping}
When learning the multiple-site model, there are interaction terms with unknown coefficients in the Hamiltonian. Under this circumstance, it is unlikely to know the eigenstates of the Hamiltonian and take a similar approach as in the single-site case. The Hamiltonian reshaping technique aims to eliminate some of the coupling terms and reduce the objective Hamiltonian to several separate subsystems with known eigenstates by randomized insertion of unitaries during the Hamiltonian simulation. Note that we are not able to learn the coefficients of the eliminated terms by doing this reshaping. Nevertheless, through multiple experiments, each involving the elimination of different terms, one can collect estimations for all coefficients. 

Given a distribution $\DD$ of unitary matrices, the reshaped Hamiltonian is defined as 
\begin{equation}
    \heff = \expect_{U\sim \DD} U^\dagger H U.
\end{equation}
The proper choice of the distribution $\DD$ will make $\heff$ a simpler Hamiltonian than $H$ and facilitate the learning of the coefficients, which will be demonstrated in the following sections. 

Given this reshaped Hamiltonian, we use the following method, similar to qDRIFT \cite{Campbell2019random}, to approximately simulate $\heff$ as 
\begin{equation}\label{eq:qdrift}
    e^{-\ii t\heff}\approx e^{-\ii\tau U_r^\dagger H U_r}\cdots e^{-\ii\tau U_1^\dagger H U_1},
\end{equation}
where $r$ is an integer and $\tau = t/r$ is the time stepsize. Each $U_j$ is randomly sampled from the distribution $\DD$. By leveraging the identity
\begin{equation}
    e^{-\ii\tau U_j^\dagger H U_j} = U_j^\dagger e^{-\ii\tau  H}U_j,
\end{equation}
it is clear that each term in \Cref{eq:qdrift} can be implemented in experiments by inserting unitaries when simulating the original Hamiltonian $H$.

For a fixed $t$, if $r\to \infty$, then the approximation in \Cref{eq:qdrift} will be exact. In practice, we can choose an appropriate $r$ to get the desired accuracy. We refer to this error caused by using a finite $r$ as the reshaping error. As an analogy of the first-order Trotter formula error estimation, we would expect $r = O(t^2)$, which is a quadratic dependence on the simulation time. The rigorous discussions of the reshaping error and the choice of $r$ will be given in the following sections.

\subsection{Two-site case}\label{sec:two-site}
In this section, we address the problem of learning a two-site model, in which the Hamiltonian can be written as 
\begin{equation}\label{eq:two-site H}
    H = -h_{12} (\coud\ctu+\ctud\cou+\codd\ctd+\ctdd\cod)+\sum_{i=1,2}\xi_i \niu\nid,
\end{equation}
and the goal is to learn the coefficients $\xi_1$, $\xi_2$, and $h_{12}$.
	
First, we focus on learning the coupling coefficient $h_{12}$. If we prepare initial states that only have spin $\uparrow$, then the terms containing spin $\downarrow$ in the Hamiltonian will be eliminated. Under this setting, the system is reduced to a two-mode system, and the effective Hamiltonian becomes
\begin{equation}\label{eq:two_init}
	H_{\mathrm{eff}} = -h_{12} (\coud\ctu+\ctud\cou).
\end{equation}
Notice that the states 
$$\sqhalf(\ket{\uparrow}_1\pm\ket{\uparrow}_2)$$ 
are the eigenstate of $\coud\ctu+\ctud\cou$ with eigenvalues $\pm 1$ respectively. Then, the problem is again a phase estimation problem, and one can apply RPE. Similar to the single-site case, one needs to prepare the superposition of the eigenstates
\begin{equation}\label{eq:phi}
    \ket{\phi} = \sqhalf\left(\sqhalf(\ket{\uparrow}_1+\ket{\uparrow}_2)+\sqhalf(\ket{\uparrow}_1-\ket{\uparrow}_2)\right) = \ket{\uparrow}_1
\end{equation}
and
\begin{equation}\label{eq:tildephi}
    \ket{\tilde{\phi}} = \sqhalf\left(\sqhalf(\ket{\uparrow}_1+\ket{\uparrow}_2)+\ii\sqhalf(\ket{\uparrow}_1-\ket{\uparrow}_2)\right) = \frac{1+\ii}{2}\ket{\uparrow}_1 + \frac{1-\ii}{2}\ket{\uparrow}_2
\end{equation}
as the initial states of the Hamiltonian simulation of $H$. These states are again Gaussian Fermionic states since they are two-mode states with odd parity (one can refer to \cite{spee2018mode}), which can thus be prepared by the algorithms from \cite{jiang2018quantum}. We also assume access to the observable 
\begin{equation}\label{eq:o2}
    O^{(2)} = \ket{\phi}\!\bra{\phi}.
\end{equation}
\cmt{
\begin{remark}
    This method can be easily extended to the case that the coefficients of spin up and down are different, which is 
    $$H = -h_{12\uparrow} (\coud\ctu+\ctud\cou)-h_{12\downarrow}(\codd\ctd+\ctdd\cod)+\sum_{i=1,2}\xi_i \niu\nid.$$
    One can use the eigenstates    $$\sqhalf(\ket{\uparrow}_1\pm\ket{\uparrow}_2)$$
    to learn $h_{12\uparrow}$, and eigenstates   $$\sqhalf(\ket{\downarrow}_1\pm\ket{\downarrow}_2)$$
    to learn $h_{12\downarrow}$.
\end{remark}
\begin{remark}
    In Hamiltonian \eqref{eq:two-site H}, we can also use the symmetrized eigenstates    $$\sqhalf(\ket{\uparrow}_1+\ket{\downarrow}_1)\pm(\ket{\uparrow}_2+\ket{\downarrow}_2)$$
    to learn $h_{12}$, which would be preferable on certain experiment platforms.
\end{remark}
}
Next, we aim to learn the coefficients $\xi_i$. In order to do this, one can insert a random unitary matrix of form
\begin{equation}
   U = e^{-\ii\theta(n_{1\uparrow}+n_{1\downarrow})}, \quad \theta\sim\mathcal{U}([0,2\pi])
\end{equation}
where $\mathcal{U}([0,2\pi])$ denotes the uniform distribution on $[0,2\pi]$. The effective Hamiltonian is then
\begin{equation}\label{eq:eliminate_cross_term}
\begin{aligned}
H_{\mathrm{eff}} &= \frac{1}{2\pi}\int_{0}^{2\pi}e^{\ii\theta(n_{1\uparrow}+n_{1\downarrow})}He^{-\ii\theta(n_{1\uparrow}+n_{1\downarrow})}\d \theta\\
&=\sum_{i=1,2}\xi_i \niu\nid-\frac{h_{12}}{2\pi}\int_{0}^{2\pi}e^{\ii\theta(n_{1\uparrow}+n_{1\downarrow})}(\coud\ctu+\ctud\cou+\codd\ctd+\ctdd\cod)e^{-\ii\theta(n_{1\uparrow}+n_{1\downarrow})}\d \theta\\
&=\sum_{i=1,2}\xi_i \niu\nid-\frac{h_{12}}{2\pi}\int_{0}^{2\pi}(e^{\ii\theta}\coud\ctu+e^{-\ii\theta}\ctud\cou+e^{\ii\theta}\codd\ctd+e^{-\ii\theta}\ctdd\cod)\d \theta\\
&=\sum_{i=1,2}\xi_i \niu\nid,
\end{aligned}
\end{equation}
where we have used the identities
\[
e^{\ii \theta n_\sigma}c_{\sigma}e^{-\ii \theta n_\sigma} = e^{-\ii\theta}c_{\sigma},~ e^{\ii \theta n_\sigma}c_{\sigma}^\dagger e^{-\ii \theta n_\sigma} = e^{\ii\theta}c_{\sigma}^\dagger, \quad \sigma \in\{ \uparrow, \downarrow\},
\]
which can be proved by applying them to the basis $\{\ket{-},\ket{\uparrow},\ket{\downarrow},\kud\}$. This effective Hamiltonian is decoupled, and we can apply the method for the single-site case. The reshaping error estimation will be postponed to the end of \Cref{sec:many-site}, as it can be combined into the discussion of the many-site reshaping error.

	
\subsection{Multiple-site case}\label{sec:many-site}
To learn the coefficients of a many-site Hubbard Hamiltonian, we adopt a divide-and-conquer approach, which involves the Hamiltonian reshaping technique as well. To illustrate the main idea, we first consider a one-dimensional chain of $N$ sites, characterized by the Hamiltonian
\begin{equation}
    H = -\sum_{|i-j|=1, \sigma\in\{\uparrow,\downarrow\}} h_{ij}\cisd\cjs+\sum_{i=1}^N\xi_i \niu\nid,\quad (h_{ij} = h_{ji}).
\end{equation}
One can insert random unitaries
\[
U = \prod_{i = 3,6,9,\ldots}e^{-\ii\theta_i(\niu+\nid)},
\]
where $\theta_i$'s are random numbers uniformly drawn from $[0,2\pi]$. This will eliminate the terms hopping from and to sites $i=3,6,9,\ldots$, which can be derived using the same calculation as in \Cref{eq:eliminate_cross_term}. The resulting effective Hamiltonian will, therefore, be
\begin{equation}
    \heff = H_{12}+H_{45} + H_{78} + \cdots,
\end{equation}
where $H_{j,j+1}$ is the Hamiltonian
\begin{equation}
    H_{j,j+1} = -h_{j,j+1} \sum_{\sigma\in\{\uparrow,\downarrow\}}(c_{j,\sigma}^\dagger c_{j+1,\sigma} + c_{j+1,\sigma}^\dagger c_{j,\sigma})+\xi_j n_{j,\uparrow}n_{j,\downarrow} + \xi_{j+1} n_{j+1,\uparrow}n_{j+1,\downarrow}
\end{equation}
for $j = 1,4,7,\ldots$, which is the restriction of $H$ on edge $(j,j+1)$. This means the effective Hamiltonian $\heff$ is now decoupled into several two-site Hamiltonians, and we can use the method in \Cref{sec:two-site} to learn the coefficients $\xi_{j}$, $\xi_{j+1}$, and $h_{j,j+1}$ for $j = 1,4,7,\ldots$. Using a similar method as above, if we insert the random unitaries 
\begin{equation*}
    U = \prod_{i = 1,4,7,\ldots}e^{-\ii\theta_i(\niu+\nid)} \quad \text{or}\quad U = \prod_{i = 2,5,8,\ldots}e^{-\ii\theta_i(\niu+\nid)}
\end{equation*}
instead, then the effective Hamiltonian will be 
\begin{equation}
    \heff = H_{23}+H_{56} + H_{89} + \cdots\quad \text{or}\quad \heff = \xi_1n_{1\uparrow}n_{1\downarrow}+H_{34}+H_{67} + \cdots,
\end{equation}    
respectively. Combining the results of these three experiments, we can learn all the coefficients $\xi_i$ and $h_{i,i+1}$.

Recall that, in the general setting, we assume the Fermionic sites are located on a graph $G = (V, E)$. Each vertex $i\in V$ denotes a Fermionic site, and each edge in $E$ denotes a pair of sites that have interactions between them. The Hamiltonian is 
\[
H = -\!\!\!\!\sum_{i\sim j, \sigma\in\{\uparrow,\downarrow\}}\!\! h_{ij}\cisd\cjs+\sum_{i}\xi_i \niu\nid,\quad (h_{ij} = h_{ji}),
\]
where $i\sim j$ means edge $(i,j)\in E$. Since we only consider the local interaction in the Hubbard model, we may assume the maximum degree of graph $G$ is \begin{equation}\label{eq:local interaction}
	\mathfrak{d} = \max_{i\in V}\deg (i) = \mathcal{O}(1).
\end{equation}

Our strategy is to classify the edges of the graph with a small number of colors, and the edges of the same color can be viewed as separate two-site Hubbard models. When learning the coefficients $h_{ij}$ (and the corresponding $\xi_i$ and $\xi_j$) of a certain color, we only need to block out the interference from other sites using the Hamiltonian reshaping technique mentioned above. After this reshaping, these separate two-site Hubbard models can be dealt with by the method in \Cref{sec:two-site}. Next, we will specify how to do this coloring. 

Consider another graph $\GG = (E,\EE)$, whose vertex set is exactly the edge set of graph $G$. The edge set $\EE$ is defined as follows. For $C$ and $C'$ in $E$, there is an edge $(C, C')$ in $\EE$ if one of the following two conditions is satisfied.
\begin{itemize}
    \item $C$ and $C'$ share a vertex in graph $G$.
    \item There exists a $C''$ in $E$ such that $C$ and $C''$ share a vertex in $G$, and $C'$ and $C''$ also share a vertex in $G$.
\end{itemize}
Under this definition, one can see that the maximal degree of $\GG$ is at most $4\mathfrak{d}^2$ since each edge $C$ can only have at most $2\mathfrak{d}$ adjacent edges in graph $G$. Therefore, the vertex set of graph $\GG$ can be colored using $\chi = 4\mathfrak{d}^2+1 = O(1)$ different colors, such that for every $(C,C')\in \EE$, the colors of $C$ and $C'$ are different. Such coloring can be obtained by a simple greedy algorithm that colors the vertices one by one and introduces a new color only if all existing color is repeated with the already-colored neighbor vertices. Therefore, we can partition the set $E$ as
\begin{equation}
    E = \bigsqcup_{c=1}^{\chi}E_c,
\end{equation}
where $E_c$ is the subset with color $c$. We also define $V_c$ as all the vertices that are contained in some edges in $E_c$. For each $C\in E_c$, denote its two vertices as $k^C_1$ and $k^C_2$. Let \begin{equation}\label{eq:vc12}
    V_{c_1}=\{k^C_1:\ C\in E_c\}\text{, and }V_{c_2}=\{k^C_2:\ C\in E_c\}.
\end{equation} 
Therefore, we have $V_c = V_{c_1}\cup V_{c_2}$. For illustration, we point out that in the Fermionic chain example above, we have $\chi = 3$, and $E_c$ contains all the edges of the form $(3k+c,3k+c+1)$, where $k\ge 0, c = 1,2,3$. The strategy of the general case is similar. When we want to learn the coefficients associated with a certain color $c$, we can apply the random unitaries
\begin{equation}\label{eq:random unitary}
    U = \prod_{i \in V\backslash V_c}e^{-\ii\theta_i(\niu+\nid)}
\end{equation}
to reshape the Hamiltonian. Due to the coloring rules, one can check that the resulting effective Hamiltonian is
\begin{equation}
    \heff = \sum_{C\in E_c} \heff^C 
\end{equation}
where $\heff^C$ is the two-site restriction of $H$ on edge $C$. To control the reshaping error, we have the following theorem. It is noteworthy that the reshaping error bound is independent of system size $N$. This can be understood intuitively, as the underlying graph $G$ has a bounded degree, making it challenging for most of the sites to significantly influence the subsystem $C$.
\begin{theorem}\label{thm:reshaping error}
    We assume random unitary operators $U_l, 1 \leq l \leq r$, are generated independently and are identically distributed as $U$, which satisfies
$$
\mathbb{E}[U^\dagger H U]=H_{\mathrm{eff}}^C+H_{\mathrm{env}},
$$
where $H_{\mathrm{eff}}^C$ is supported on a subsystem $C(|C|=\mathcal{O}(1))$ and $H_{\mathrm{env}}$ is supported on the rest of the system. Then
\begin{equation}\label{eq:reshaping error operator norm}
    \left\|\mathbb{E}\left[\prod_{1 \leq l \leq r}^{\rightarrow}\left(U_l^\dagger e^{\ii H \tau} U_l\right)\left(O_C \otimes I\right) \prod_{1 \leq l \leq r}^{\leftarrow}\left(U_l^\dagger e^{-\ii H \tau} U_l\right)\right] - e^{\ii H_{\mathrm{eff}}^C t} O_C e^{-\ii H_{\mathrm{eff}}^C t} \otimes I\right\|=\mathcal{O}\left(t^2 / r\right)
\end{equation}
for any $O_C$ supported on $C$ satisfying $\left\|O_C\right\| \leq 1$. In particular, the constant in $\mathcal{O}\left(t^2 / r\right)$ does not depend on the system size $N$.
\end{theorem}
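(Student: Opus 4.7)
The plan is to pass to a Heisenberg-picture superoperator formulation, reduce the global bound to a per-step estimate via telescoping, and then extract an $N$-independent constant through a locality-plus-independence cancellation. Introduce the channels
$$\bar{\mathcal{V}}(O):=\expect\bigl[U^\dagger e^{\ii H\tau}U\,O\,U^\dagger e^{-\ii H\tau}U\bigr],\qquad \mathcal{W}(O):=e^{\ii\bar H\tau}Oe^{-\ii\bar H\tau},$$
with $\bar H:=H_{\mathrm{eff}}^C+H_{\mathrm{env}}$. Since the $U_l$'s are i.i.d., the left-hand side of \eqref{eq:reshaping error operator norm} equals $\bar{\mathcal{V}}^r(O_C\otimes I)$, and since $H_{\mathrm{eff}}^C$ and $H_{\mathrm{env}}$ act on complementary subsystems they commute, so $\mathcal{W}^r(O_C\otimes I)=(e^{\ii H_{\mathrm{eff}}^Ct}O_Ce^{-\ii H_{\mathrm{eff}}^Ct})\otimes I$ is exactly the right-hand side. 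It therefore suffices to bound $\|(\bar{\mathcal{V}}^r-\mathcal{W}^r)(O_C\otimes I)\|$.

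Both $\bar{\mathcal{V}}$ (a convex combination of unitary conjugations) and $\mathcal{W}$ (a unitary conjugation) are contractions in operator norm, so the telescoping identity $\bar{\mathcal{V}}^r-\mathcal{W}^r=\sum_{k=0}^{r-1}\bar{\mathcal{V}}^{r-1-k}(\bar{\mathcal{V}}-\mathcal{W})\mathcal{W}^k$ gives
$$\|(\bar{\mathcal{V}}^r-\mathcal{W}^r)(O_C\otimes I)\|\le \sum_{k=0}^{r-1}\|(\bar{\mathcal{V}}-\mathcal{W})(O_k)\|,\qquad O_k:=\mathcal{W}^k(O_C\otimes I),$$
where each $O_k$ is still supported on $C$ with $\|O_k\|\le 1$. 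Taylor-expanding $e^{\pm\ii H\tau}$ and $e^{\pm\ii\bar H\tau}$ to second order and using $\expect[U^\dagger HU]=\bar H$, the $\tau^0$ and $\tau^1$ contributions match, and the leading discrepancy is
$$\tau^2\Bigl(\expect[U^\dagger HU\,O_k\,U^\dagger HU]-\bar HO_k\bar H-\tfrac12\{\expect[U^\dagger H^2U]-\bar H^2,\,O_k\}\Bigr)+\mathcal{O}(\tau^3).$$

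The crux, and the main obstacle, is to bound this $\tau^2$ expression by a constant independent of $N$, since a naive triangle inequality would pay $\|H\|^2=\Omega(N^2)$. The plan is to split $H=H_{\mathrm{near}}+H_{\mathrm{far}}$, where $H_{\mathrm{near}}$ collects the $\mathcal{O}(\mathfrak{d})$-many local terms of $H$ whose support touches $C$ and $H_{\mathrm{far}}$ is the rest. Because $U=\prod_i e^{-\ii\theta_i(\niu+\nid)}$ factors over independent $\theta_i$'s, the operators $A:=U^\dagger H_{\mathrm{near}}U$ and $B:=U^\dagger H_{\mathrm{far}}U$ depend on disjoint families of angles and are therefore independent as random operators; moreover they are supported on disjoint regions and, since Hubbard terms have even Fermion number, $[A,B]=0$ and $[B,O_k]=0$. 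Substituting $U^\dagger HU=A+B$ into the $\tau^2$ bracket and using these independence and commutation properties, every contribution quadratic in $B$ cancels exactly between $\expect[\cdot\,O_k\,\cdot]-\bar HO_k\bar H$ and the anticommutator, so the residual reduces to $(\expect[AO_kA]-\bar AO_k\bar A)-\tfrac12\{\expect[A^2]-\bar A^2,O_k\}$, which is bounded in operator norm by a multiple of $\|H_{\mathrm{near}}\|^2=\mathcal{O}(1)$ by the bounded-degree hypothesis \eqref{eq:local interaction} together with $|h_{ij}|,|\xi_i|\le 1$. Summing the resulting per-step bound $C_0\tau^2$ over the $r$ steps gives the claimed $\mathcal{O}(t^2/r)$ with an $N$-independent constant.
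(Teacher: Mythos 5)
Your overall architecture --- passing to the Heisenberg-picture channels $\bar{\mathcal{V}}$ and $\mathcal{W}$, telescoping via contractivity, Taylor-expanding each step to second order, and invoking locality to make the per-step constant independent of $N$ --- is sound, and it is essentially the route of the proof the paper defers to (\cite{huang2023learning}, Appendix~D, Theorem~16); the paper itself gives no self-contained argument beyond the remark that that proof extends from Pauli conjugations to general unitaries. Everything up to and including your identification of the $\tau^2$ discrepancy is correct.

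The step you yourself call the crux, however, contains a genuine error. With $H_{\mathrm{near}}$ the sum of terms touching $C$ and $H_{\mathrm{far}}$ the rest, the operators $A=U^\dagger H_{\mathrm{near}}U$ and $B=U^\dagger H_{\mathrm{far}}U$ are \emph{not} independent, \emph{not} disjointly supported, and do \emph{not} commute: a hopping term $h_{ij}c_{i\sigma}^\dagger c_{j\sigma}$ with $i\in C$, $j\notin C$ lies in $H_{\mathrm{near}}$ but is supported on the boundary site $j$ and depends on the angle $\theta_j$, while a term $h_{jk}c_{j\sigma}^\dagger c_{k\sigma}$ with $j,k\notin C$ lies in $H_{\mathrm{far}}$ and shares that site and that angle, and the two terms have a nonvanishing commutator. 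So the advertised exact cancellation of all $B$-dependent contributions does not occur. Using only the facts that do hold, namely $[B,O_k]=[\bar B,O_k]=0$ (valid because $B$ is even and supported off $C$), the purely quadratic-in-$B$ pieces cancel but the cross terms leave the residue $\tfrac12\bigl[\mathbb{E}[[A,B]]-[\bar A,\bar B],\,O_k\bigr]$. The proof is repairable because this residue is still $\mathcal{O}(1)$: $[A,B]=U^\dagger[H_{\mathrm{near}},H_{\mathrm{far}}]U$, and only the $\mathcal{O}(\mathfrak{d}^2)$ terms of $H_{\mathrm{far}}$ whose supports meet that of $H_{\mathrm{near}}$ contribute, so its norm is bounded independently of $N$ by \Cref{eq:local interaction} --- but this is a locality estimate, not a cancellation, and it is the missing ingredient. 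A secondary gap of the same flavor: your ``$+\mathcal{O}(\tau^3)$'' remainder is not automatically $N$-independent (naively it costs $\|H\|^3=\Omega(N^3)$); one needs the nested-commutator bound $\|\mathrm{ad}_{\tilde H}^m(O_k)\|\le m!\,c^m$ for bounded-degree local Hamiltonians, together with the observation that one may assume $\tau=\mathcal{O}(1)$ since otherwise the claimed bound is vacuous.
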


This theorem is slightly modified from the \cite[Appendix D, Theorem 16]{huang2023learning}, and the proof is almost the same. Though \cite{huang2023learning} only discussed the case of each $U_l$ being the tensor product of Pauli matrices, the method of proof there applies to general unitary matrices. 

Now, we have collected the ingredients for analyzing the estimation of coefficients $h_{ij}$'s. However, when estimating the coefficients $\xi_i$'s, another reshaping is needed as described in \Cref{sec:two-site}. Notice that the two reshaping processes are inserting unitaries with non-intersect supports. Therefore, they can be combined into a single reshaping process, i.e., inserting random unitaries sampled from the product distribution of the two distributions. To be more concrete, assume that the task is to learn $\xi_{k_1^0}$, where $k_1^0\in V_{c_1}$ is an index belonging to an edge $C_0 = (k_1^0,k_2^0)$ of color $c_1$. Then, the two steps of reshaping are equivalent to directly inserting random unitaries of the form
\begin{equation}\label{eq:reshape k1}
	U = \prod_{i \in (V\backslash V_{c_1})}e^{-\ii\theta_i(\niu+\nid)}.
\end{equation}
After reshaping, the effective Hamiltonian becomes
\begin{equation*}
	\heff = \heff^{\{k_1^0\}} + H_{\mathrm{env}}.
\end{equation*}
By viewing $\{k_1^0\}$ as the subsystem $C$ in \Cref{thm:reshaping error}, we conclude the same reshaping error estimation applies to $\xi_{k_1^0}$. By inserting random unitaries
\begin{equation}\label{eq:reshape k2}
	U = \prod_{i \in (V\backslash V_{c_2})}e^{-\ii\theta_i(\niu+\nid)}.
\end{equation}
we can also learn $\xi_{k_2^0}$. 

The algorithm for learning the Hamiltonian \Cref{eq:hamiltonian} is outlined in \Cref{alg:learning}. \cmt{We summarize the necessary operations on an experimental platform:
\begin{itemize}
    \item preparing the vacuum state $\ket{-}$;
    \item the fermion Gaussian state on at most two sites, together with its preparing operator \cite{jiang2018quantum, spee2018mode};
    \item measurement of spin-orbital occupation;
    \item implementing $e^{\ii\theta n_{\uparrow}}$ and $e^{\ii\theta n_{\downarrow}}$ for a single site, which is less demanding than in \cite{bravyi2002fermionic}.
\end{itemize}}

The following theorem summarizes the complexity bounds for this algorithm under \Cref{eq:local interaction}.

\begin{algorithm}[ht]
	\caption{Fermionic Hamiltonian learning for Hubbard models}
	\label{alg:learning}
	\begin{algorithmic}[1]
		\STATE{\textbf{Input:} The Hamiltonian $H$ of the form \Cref{eq:Hubbard}, and the corresponding graph $G = (V,E)$.}
		\STATE{Generate the coloring partition $E = \bigsqcup_{c=1}^{\chi}E_c$.}
            \STATE{Calculate $J$ and $N_s$ according to \Cref{alg:pruning}.}
		\FOR{$c = 1,2,\ldots, \chi$}
            \STATE{Determine $V_{c_1}$, $V_{c_2}$, and $V_c$ according to \Cref{eq:vc12}.}
            \FOR{$j = 0,1,\ldots, J$}
            \STATE{Prepare initial state $\ket{\Phi_c} = \bigwedge_{C\in E_c}\ket{\phi_C}$, where $\ket{\phi_C}$ is the state defined in \Cref{eq:phi} corresponding to edge $C$.}
		\STATE{Let $t = 2^j$. Insert the random unitaries of the form \Cref{eq:random unitary} every $\tau$ time while evolving the Hamiltonian $H$, where the time step $\tau$ is chosen such that the reshaping error (\ref{eq:reshaping error operator norm}) is no more than $(\frac{\sqrt{3}}{2}-\frac{2}{3})/4$.}
            \STATE{Perform measurement using $O_c = \bigotimes_{C\in E_c}O^{(2)}_C$, where $O^{(2)}_C$ is the observable defined in \Cref{eq:o2}.} \STATE{Perform this observation for $N_s/2$ times and take the average to get $X_j^C$ as an estimation of $\cos(2^j h_{k^C_1 k^C_2})$. (This is simultaneously for all $C\in E_c$.)}
            \STATE{Similarly, use initial state $\ket{\tilde{\Phi}_c} = \bigwedge_{C\in E_c}\ket{\tilde{\phi}_C}$ to get $Y_j^C$ as an estimate of $\sin(2^j h_{k^C_1 k^C_2})$. }
            \STATE{Use RPE to get $\hat{h}_{k^C_1 k^C_2}$ as an estimation of $h_{k^C_1 k^C_2}$. }
        \FOR{$m = 1,2$}
            \STATE{Prepare initial state $\ket{\Psi_{cm}} = \bigwedge_{i\in V_{cm}}\ket{\psi_i}$, where $\ket{\psi_i}$ is the state defined in \Cref{eq:psi} corresponding to site $i$.}
		\STATE{Let $t = 2^j$. Insert the random unitaries of the form \Cref{eq:reshape k1} or \Cref{eq:reshape k2} every $\tau$ time while evolving the Hamiltonian $H$, where the time step $\tau$ is chosen such that the reshaping error (\ref{eq:reshaping error operator norm}) is no more than $(\frac{\sqrt{3}}{2}-\frac{2}{3})/4$.}
            \STATE{Perform measurement using $O_{cm} = \bigotimes_{i\in V_{cm}}O_i$, where $O_i$ is the observable defined in \Cref{eq:o}.} \STATE{Perform this observation for $N_s/2$ times and take the average to get $X_j^{\{i\}}$ as an estimation of $\cos(2^j \xi_i)$ for $i\in V_{cm}$.}
            \STATE{Similarly, use initial state $\ket{\tilde{\Psi}_{cm}} = \bigwedge_{i\in V_{cm}}\ket{\tilde{\psi}_i}$ to get $Y_j^{\{i\}}$ as an estimate of $\sin(2^j \xi_i)$.}
            \STATE{Use RPE to get $\hat{\xi_i}$ as an estimation of $\xi_i$. }
            \ENDFOR
            \ENDFOR
		\ENDFOR
		\STATE{\textbf{Output:} $\hat{h}_{ij}$ and $\hat{\xi_i}$. }
	\end{algorithmic}
\end{algorithm}

\begin{theorem}\label{thm:main_detailed}
	Assume $H$ is a Hamiltonian in the form \Cref{eq:hamiltonian} satisfying \Cref{eq:local interaction}. For a given failure probability $\eta$, we can generate estimations $\hat{h}_{ij}$ and $\hat{\xi_i}$ from \Cref{alg:learning} such that 
	\begin{equation}\label{eq:acc}
		\mathrm{Prob}(|\hat{h}_{ij}-h_{ij}|<\eps) > 1-\eta \quad\text{and}\quad \mathrm{Prob}(|\hat{\xi_{i}}-\xi_{i}|<\eps) > 1-\eta
	\end{equation}
for all $i,\ j$ at the cost of
\begin{itemize}
    \item $\tilde{\mathcal{O}}\left(\epsilon^{-1} \log \left(\eta^{-1}\right)\right)$ total evolution time;
    \item $\tilde{\mathcal{O}}\left(\log\left(\epsilon^{-1}\right) \log \left(\eta^{-1}\right)\right)$ number of experiments;
    \item $\tilde{\mathcal{O}}\left(N \epsilon^{- 2}  \log \left(\eta^{-1}\right)\right)$ single-site random unitaries insertions.
\end{itemize}
\end{theorem}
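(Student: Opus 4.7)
The plan is to reduce the many-body learning problem, one color at a time, to a family of independent single-site and two-site RPE subproblems to which \Cref{thm:rpe} directly applies, and then to bookkeep the three resource bounds. The coloring construction of \Cref{sec:many-site} guarantees $\chi = 4\mathfrak{d}^2+1 = \mathcal{O}(1)$ classes, and within each class $c$ the reshaping unitaries \eqref{eq:random unitary}, \eqref{eq:reshape k1}, \eqref{eq:reshape k2} produce an idealized effective Hamiltonian that decouples into at most two-site restrictions whose eigenstructure is the one exploited in \Cref{sec:single-site} and \Cref{sec:two-site}. Measurements of $O_c$ and $O_{cm}$ on the prepared product states then reproduce, after averaging over the inserted random unitaries, the $\tfrac{1}{2}(1+\cos(2^j\theta))$ and $\tfrac{1}{2}(1+\sin(2^j\theta))$ signals driving the RPE loop, up to the reshaping bias controlled by \Cref{thm:reshaping error}.

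To establish \eqref{eq:acc}, I would exploit the slack $\tfrac{\sqrt{3}}{2}-\tfrac{2}{3}$ in the RPE stability window highlighted in the remark after \Cref{thm:rpe}, splitting it evenly between the reshaping bias and the Monte-Carlo fluctuation. At RPE level $j$ the simulation time is $t=2^j$, so \Cref{thm:reshaping error} forces $r_j = \Theta(2^{2j})$ insertions per experiment to push the bias below a quarter of the slack, with a constant that is crucially $N$-independent. For the statistical piece, Hoeffding's inequality on the $N_s/2$-sample averages yields concentration at rate $\exp(-\Omega(N_s))$; taking a union bound over the $\mathcal{O}(N)$ parallel edge and vertex estimators in each color, the $J+1=\mathcal{O}(\log\eps^{-1})$ RPE levels, the $\mathcal{O}(1)$ measurement types ($X_j^C,Y_j^C,X_j^{\{i\}},Y_j^{\{i\}}$), and the $\chi=\mathcal{O}(1)$ colors gives $N_s = \mathcal{O}(\log(N\eta^{-1}\log\eps^{-1}))$. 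Combined, the bias and fluctuation keep every $Z_j$ strictly within $\tfrac{\sqrt{3}}{2}$ of $e^{\ii 2^j\theta}$, so \Cref{thm:rpe} delivers $\eps$-accurate RPE outputs for every $h_{ij}$ and $\xi_i$ simultaneously with probability at least $1-\eta$.

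The cost accounting then follows mechanically. The total evolution time per color is $\mathcal{O}\bigl(N_s\sum_{j=0}^{J}2^j\bigr) = \mathcal{O}(N_s \eps^{-1})$ since $2^J=\mathcal{O}(\eps^{-1})$; multiplying by $\chi$ and absorbing $N_s$ yields $\tilde{\mathcal{O}}(\eps^{-1}\log\eta^{-1})$. The number of experiments is $\chi(J+1)\cdot\mathcal{O}(N_s) = \tilde{\mathcal{O}}(\log(\eps^{-1})\log\eta^{-1})$. Each experiment at level $j$ uses at most $|V\setminus V_c|\cdot r_j = \mathcal{O}(N\cdot 2^{2j})$ single-site unitary insertions, so summing $j$ yields $\mathcal{O}(N\cdot 2^{2J}) = \mathcal{O}(N\eps^{-2})$ per experiment; multiplying by $N_s$ and $\chi$ gives $\tilde{\mathcal{O}}(N\eps^{-2}\log\eta^{-1})$, matching the third bullet.

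The step I expect to be the main obstacle is the joint accounting of the reshaping bias (a deterministic-after-averaging object controlled by \Cref{thm:reshaping error}) and the Monte-Carlo fluctuation (the per-sample deviation from that averaged value), so that when several signals $X_j^C$ and $Y_j^C$ associated to different edges or vertices in the same color are extracted from a single batch of experiments, the two error contributions add only additively and fit inside the narrow RPE window. Securing this decomposition is exactly where the $N$-independence of the constant in \Cref{thm:reshaping error} pays off: it prevents $N$ from entering either the evolution-time or the experiment-count bound, leaving $N$ to appear solely through the multiplicity of single-site unitaries inside each insertion.
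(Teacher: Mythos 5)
Your proposal is correct and follows essentially the same route as the paper: coloring to reduce to decoupled one- and two-site subproblems, \Cref{thm:reshaping error} to bound the reshaping bias inside the gap between $\tfrac{2}{3}$ and $\tfrac{\sqrt{3}}{2}$ of the RPE window, Hoeffding for the Monte-Carlo part, and the same resource bookkeeping with $r=\mathcal{O}(t^2)$ insertions of $\mathcal{O}(N)$ single-site unitaries each. The one divergence is that you union-bound over the $\mathcal{O}(N)$ parallel estimators, which proves a stronger simultaneous-success statement but inserts a $\log N$ into $N_s$ (and hence into the evolution-time bound) that the paper avoids because \Cref{eq:acc} only asserts a per-parameter success probability of $1-\eta$.
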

\begin{proof}
    We first prove \Cref{eq:acc}. For a site $i$, we may assume $i\in V_{c_1}$ without loss of generality. Since we chose the time step $\tau$ such that the reshaping error is no more than $(\frac{\sqrt{3}}{2}-\frac{2}{3})/4$, which means the difference between
    \begin{equation*}    
    \mathbb{E}\half (1+X_j^{\{i\}}) = \mathbb{E}\bra{\Psi_{c_1}}\left[\prod_{1 \leq l \leq r}^{\rightarrow}\left(U_l^\dagger e^{\ii H \tau} U_l\right)\left(O_i \otimes I\right) \prod_{1 \leq l \leq r}^{\leftarrow}\left(U_l^\dagger e^{-\ii H \tau} U_l\right)\right]\ket{\Psi_{c_1}}
    \end{equation*}
    and 
    \begin{equation*} \bra{\Psi_{c_1}}e^{\ii H_{\mathrm{eff}}^i t} O_i e^{-\ii H_{\mathrm{eff}}^i t} \otimes I\ket{\Psi_{c_1}} = \half (1+\cos(t\xi_i))
    \end{equation*}
    is at most $(\frac{\sqrt{3}}{2}-\frac{2}{3})/4$, where $t = 2^j$. Therefore, 
    \begin{equation*}
        \abs{\mathbb{E} X_j^{\{i\}} - \cos(t\xi_i)}\le(\frac{\sqrt{3}}{2}-\frac{2}{3})/2.
    \end{equation*}
    We also have the same estimation for $\mathbb{E} Y_j^{\{i\}}$, therefore we have
    \begin{equation}
        \abs{\mathbb{E} (X_j^{\{i\}}+\ii Y_j^{\{i\}}) - e^{t\xi_i}}\le\frac{\sqrt{3}}{2}-\frac{2}{3}.
    \end{equation}
    \cmt{We also have to consider the Monte Carlo error. It can be bounded by $\frac{2}{3}$ using Hoeffding's inequality as the same reason in \Cref{eq:zj}, which is
    \begin{equation}
        \abs{(X_j^{\{i\}}+\ii Y_j^{\{i\}}) - \mathbb{E} (X_j^{\{i\}}+\ii Y_j^{\{i\}})}\le\frac{2}{3},
    \end{equation}
    for all $j=0,1,\ldots,J+1$ with probability $1-\eta$.}
    Therefore, we conclude that with probability $1-\eta$, we have 
    \begin{equation}
        \abs{(X_j^{\{i\}}+\ii Y_j^{\{i\}}) - e^{t\xi_i}}\le\frac{\sqrt{3}}{2},
    \end{equation}
    for all $j=0,1,\ldots,J+1$. These bounds guarantee the RPE algorithm to give an $\eps$ accurate estimation of $\xi_i$.

    It remains to analyze the cost. Since the number of colors $\chi=\mathcal{O}(1)$ as demonstrated above, the number of experiments is of order $J\cdot N_s = \tilde{\mathcal{O}}\left(\log\left(\epsilon^{-1}\right) \log \left(\eta^{-1}\right)\right).$ In each experiment, the longest evolution time of $H$ is $2^J = \mathcal{O}(\eps^{-1})$, which makes the total evolution time $\mathcal{O}(J N_s\eps^{-1}) = \tilde{\mathcal{O}}\left(\epsilon^{-1} \log \left(\eta^{-1}\right)\right).$ For each experiment, we have to insert the \cmt{$r = \mathcal{O}(t^2)$} random unitaries of the form \Cref{eq:random unitary}, \Cref{eq:reshape k1}, or \Cref{eq:reshape k1}, according to \Cref{eq:reshaping error operator norm} as we have to achieve $(\frac{\sqrt{3}}{2}-\frac{2}{3})/4 = \mathcal{O}(1)$ accuracy. \cmt{At the maximal time $t=2^J$, we need $r = \mathcal{O}((2^J)^2) = \mathcal{O}(\eps^{-2})$ random unitaries,} and each random unitary contains $\mathcal{O}(N)$ single-site unitaries. Therefore, we conclude that the total number of single unitary insertions is 
    \begin{equation*}
        \mathcal{O}(\eps^{-2}NN_sJ) = \tilde{\mathcal{O}}\left(N \epsilon^{- 2}  \log \left(\eta^{-1}\right)\right).
    \end{equation*}
\end{proof}

\section{Conclusion and discussion}
This work proposed a method for learning an unknown Fermionic Hamiltonian of Hubbard models. Our method achieves the $\tilde{\mathcal{O}}(\eps^{-1})$ scaling of total evolution time, which matches the Heisenberg limit. Notably, it relies only on elementary manipulations applicable to single-site or two-site Fermionic systems, which is desirable in the experimental platforms.

On the theoretical aspect, several open problems are worth further exploration. One problem is the extension of our method to deal with more general Fermionic Hamiltonians beyond the Hubbard model. For example, whether it is feasible to learn Hamiltonians featuring long-range interactions within a total evolution time of $\tilde{\mathcal{O}}(\eps^{-1})$ while simultaneously minimizing dependence on the system size $N$ is an interesting question to consider. Another problem is whether the quadratic $\tilde{\mathcal{O}}(\eps^{-2})$ term in the number of single-site random unitary insertion can be reduced by substituting the Hamiltonian reshaping method used here by some analogs of higher order trotter formulas, similar to the considerations in \cite{huang2023learning} under the spin setting.

Regarding the experimental aspects, several problems can be further investigated. For instance, the states and observables we discussed may not be perfectly prepared due to the limitations of experimental platforms. However, an $\mathcal{O}(1)$ amount of noise is allowed, which is a notable feature of the RPE algorithm. One may also be able to modify our method to accommodate initial states with a certain degree of randomness while still attaining the desired results. Another possible improvement is to replace the discrete random unitary insertion with some continuous random unitary evolution, which may be more desirable to experimental implementations.

\section*{Statements and Declarations}
\paragraph{\textbf{Competing Interests:}} The authors declare that they have no competing interests.

\paragraph{\textbf{Funding:}}  The work of L.Y. is partially supported by the National Science Foundation under awards DMS-2011699 and DMS-2208163.

\paragraph{\textbf{Acknowledgements:}}  We thank Yu Tong for the helpful comments on the initial draft of this work.

\bibliographystyle{unsrtnat}
\bibliography{ref}
	
\end{document}